\documentclass[letterpaper, 10 pt, conference]{ieeeconf}  

\usepackage{graphicx}
\usepackage{epsfig}
\usepackage{mathptmx}
\usepackage{times}
\usepackage{amsmath} 
\usepackage{amsthm}
\usepackage{amssymb}
\usepackage{tcolorbox}
\usepackage{cite}
\usepackage[ruled,vlined]{algorithm2e}
\usepackage[colorlinks = true, linkcolor = blue, citecolor =
blue]{hyperref}
\usepackage{enumitem}
\usepackage{subcaption}
\usepackage[mathscr]{euscript}
\usepackage{xcolor}

\usepackage{mathtools}
\usepackage[export]{adjustbox}

\usepackage{multicol}

\newcommand\new[1]{{\color{black}#1}}


\newtheorem{thm}{Theorem}
\newtheorem{prop}[thm]{Proposition}
\newtheorem{lem}[thm]{Lemma}

\newtheorem{rem}[thm]{Remark}


\newcommand{\ldef}{:=}

\newcommand{\Mc}[1]{\mathcal{#1}}
\newcommand{\real}{\ensuremath{\mathbb{R}}}

\newcommand{\nat}{{\mathbb{N}}}
\newcommand{\natz}{{\mathbb{N}}_0}

\newcommand{\dd}{\mathrm{d}}
\newcommand{\sinc}{\text{sinc}}
\newcommand{\norm}[1]{\left\lVert #1 \right\rVert}

\DeclareMathOperator*{\argmin}{arg\,min}


\newcommand{\param}{\textbf{a}}
\newcommand{\tth}{^\text{th}}
\newcommand{\sys}{\mathscr{S}}


\newcommand{\thmtitle}[1]{\mbox{}\textit{(#1).}}

\newcommand{\remend}{\relax\ifmmode\else\unskip\hfill\fi\hbox{$\bullet$}}



\title{Event-Triggered Polynomial Control for Trajectory Tracking by 
Unicycle Robots}

\author{Harini V$^{1}$, Anusree Rajan$^{\new{1}}$, Bharadwaj 
	Amrutur$^{1}$, and Pavankumar Tallapragada$^{\new{1}}$ 
		\thanks{$^{1}$ Robert Bosch Centre for Cyber Physical Systems, Indian 
		Institute of Science, Bengaluru. {\tt\small \{hariniv, anusreerajan, amrutur, 
			pavant\}@iisc.ac.in}}%
}

	\begin{document}

		\maketitle
		\thispagestyle{empty}
		\pagestyle{empty}

	\begin{abstract}
	This paper proposes an event-triggered polynomial control method for 
	trajectory tracking by unicycle robots. In this method, each 
	control input between two consecutive events is a 	
	polynomial \new{and its coefficients} are chosen to minimize the error in approximating a 
	continuous-time control signal. We design an event-triggering rule 
	that guarantees uniform ultimate boundedness of the tracking error and \new{non-Zeno behavior of inter-event times}. We illustrate 
	our results through a suite of numerical simulations and experiments, \new{which indicate} that the number of 
	events generated by the proposed controller is significantly less 
	compared to that by a time-triggered controller or a event-triggered 
	controller based on zero-order hold while guaranteeing similar 
	tracking performance.	
	\end{abstract}

\section{INTRODUCTION}

\subsection{Motivation}
Trajectory tracking for mobile robots is a well-studied problem 
with many applications, such as industrial automation, military 
surveillance and multi-robot coordination. An important challenge
in these applications may be constrained resources, such as 
communication, computation, and energy. Event-triggering~\cite{PT:2007, 
WH:2012, ML:2010} is a popular method for control under 
such resource constraints. In the event-triggered control (ETC) literature, 
design of control laws is mostly based on the zero-order-hold (ZOH) 
technique. However, for many of the communication protocols used in control applications, such as 
TCP and UDP~\cite{DH:2020}, there is a minimum packet size. ZOH control may therefore result in under 
utilization of each packet while simultaneously increasing the number 
of communication instances. \new{Whereas}, non-ZOH 
control could improve the utilization of each packet while reducing the 
number of communication instances. 
\new{With this motivation}, in this paper, we propose an event-triggered 
polynomial control method for trajectory tracking of unicycle robots.

\subsection{Literature Review}

The literature on trajectory tracking with event-triggered
communication includes~\cite{PT-NC:2013}, which designs 
an event-triggered tracking controller for non-linear systems that 
guarantees uniform ultimate (UU)
boundedness of the tracking error and non-Zeno behavior of inter-event 
times (IETs). Similarly, references~\cite{RP-etal:2015, CX-YF-JQ:2020, 
santos2020lyapunov} propose Lyapunov based event or self-triggered 
tracking controllers for mobile robots by emulating a continuous time 
controller and guarantee ultimate boundedness of the tracking error. 
Reference~\cite{santos2014adaptive} 
obtains a linear system model for a 
Pioneer robot by system identification and designs an adaptive 
self-triggered tracking controller. Reference~\cite{KV-etal:2017} 
proposes an event-triggered optimal tracking control method for 
nonlinear systems using ideas from reinforcement learning.
Whereas, reference~\cite{QC-etal:2019} presents a self-triggered model predictive control (MPC) strategy for trajectory tracking of unicycle-type robots with input constraints and bounded disturbances.
Reference~\cite{PG-etal:2022} deals with the tracking control of 
quadrotors with external disturbances and proposes an event-triggered 
sliding mode control strategy. The recent paper~\cite{PZ-etal:2023} 
proposes both event-triggered and self-triggered saturated feedback 
control strategies for trajectory tracking of unicycle mobile robots.
In all these papers, \new{except in~\cite{QC-etal:2019}}, the control input to the plant is held constant between two successive events.

\new{Model-based ETC~\cite{EG-PA:2013, MH-FD:2013} is one among the limited number of works on non-ZOH ETC}. In 
this method the control input to the plant is time-varying even between 
two successive events and is generated using a model of the plant whose
state is updated in an event-triggered manner. 
\new{Other control methods based on non-ZOH control are event/self-triggered MPC~\cite{HL-YS:2014, HL-etal:2018} and event-triggered dead-beat 
control (DBC)~\cite{BD-etal:2017}. In these methods, 
the controller transmits a sequence of control inputs to the actuator 
at each triggering instant and the actuator applies this control sequence to the plant 
until the next control packet arrives. 
References~\cite{AL-JS:2023,KH-etal:2017} extend this idea based on first-order-hold (FOH), where, the control input  is linearly interpolated between sampling points in the prediction horizon.}

\new{Our recent work~\cite{AR-PT:2023} proposes a novel non-ZOH based ETC method, called as event-triggered parameterized control (ETPC) method, 
for stabilization of linear systems. In~\cite{AR-PT:2024}, we extend this control method to nonlinear control settings with external disturbances. There are also a few papers~\cite{AR-AK-PT:2024,SD-eal:2023} that use a parameterized control law in MPC like problems but not with event-triggering.}

\subsection{Contributions}
The major contributions of this work are \new{as follows}:
\begin{itemize}
	\item We propose an event-triggered polynomial control (ETPC) method for 
	trajectory tracking of unicyle robots. The proposed controller guarantees UU boundedness of the tracking 
	error and non-Zeno behavior of IETs.
	\item We present the results of practical implementation on a ground 
	robot. This is a contribution to the as yet limited literature on 
	practical implementations of event-triggered controllers.
	\item Through simulations and experiments, we show significant 
	reduction in communication compared to the standard ZOH
	based ETC.
	\item \new{Compared to other existing methods, such as model based ETC, event/self-triggered MPC and event-triggered DBC, the proposed method allows for applying a time-varying control input to the plant even between two successive events using limited computational resources at the actuator and transmitting limited information over the communication network.}
	\item \new{The only papers that consider an event-triggered control method similar to 
	the one proposed is~\cite{AR-PT:2023,AR-PT:2024}, where the objective is stabilization. On the other hand, in the current paper, the context is 
	trajectory tracking for unicycle robot models. 
The results in~\cite{AR-PT:2024}, which considers a time invariant nonlinear system, are not directly applicable here for the stability analysis of tracking error as the tracking error dynamics explicitly depends on time $t$. Moreover, in this paper, we validate our results practically through several experiments.
}
\end{itemize}

\subsection{Notation}
Let $\real$ denote the set of all real numbers. Let $\nat$ and $\natz$ 
denote the set of all positive and non-negative integers, 
respectively.  For any $x \in \real^n$, $\norm{x}$ denotes the 
euclidean norm of $x$. 
For any right continuous function $f: \real_{\ge 
	0} \to \real^n$ and $t \ge 0$, $f(t^{+}) \ldef \lim\limits_{s \to 
	t^{+}} f(s)$. For any two functions $v, w : [0, T] \rightarrow \real$, let
\begin{equation*}
\langle v, w \rangle := \int_{0}^{T} v( \tau ) w( \tau ) \dd \tau.
\end{equation*}

\section{PROBLEM SETUP}	\label{sec:problem_setup}

In this paper, we propose a tracking control 
method that works best in cases where communication is significantly 
more costly than computation. In this section, we present the system 
model and the objective of this paper.
\subsection{System Dynamics}
Consider the unicycle model of a robot,
\begin{equation}\label{eq:unicycle_robot}
\dot{x}=v\cos \theta, \quad
\dot{y}= v\sin \theta, \quad
\dot{\theta} = \omega,
\end{equation}
where $(x,y)$ denotes the position of the robot and $\theta$ denotes 
the orientation of the robot, which is the angle between the heading 
direction of the robot and the $x-$axis. 
$v$ and 
$\omega$ denote, respectively, the linear velocity and the angular 
velocity of the robot, which are the control inputs. 
The position and 
the orientation of the robot are continuously available to the 
controller.
The robot has to track a given reference trajectory which satisfies the 
following model,
\begin{equation}\label{eq:reference_trajectory}
\dot{x}_r=v_r \cos \theta_r, \quad
\dot{y}_r= v_r \sin \theta_r, \quad
\dot{\theta}_r = \omega_r,
\end{equation}
where $(x_r,y_r)$ and $\theta_r$ denote, respectively, the reference
position and the reference orientation. $v_r$ and $\omega_r$ denote the 
inputs to the reference system. We make the following assumption on the 
reference inputs, just as in~\cite{RP-etal:2015}.

\begin{enumerate}[resume, label=\textbf{(A\arabic*)}]
	\item There exists $M \ge 0$ such that $|v_r(t)|, |\omega_r(t)|,
	|\dot{v}_r(t)|$ and $|\dot{\omega}_r(t)|$ are upper
	bounded by $M, \ \forall t\ge t_0$. Moreover, there exists a $c >0$
	such that $|v_r(t)| \ge c, \ \forall t \ge t_0$.  \label{A:A1}  
	\remend
\end{enumerate}
 We also assume that the reference trajectory and the reference inputs 
 are available to the controller a priori. 

The tracking error, in the robot frame, can be represented as follows,
\begin{equation}\label{eq:tracking_error}
\begin{bmatrix}
x_e \\ y_e \\ \theta_e
\end{bmatrix}
:=\begin{bmatrix}
\cos\theta && \sin\theta && 0 \\
-\sin\theta && \cos\theta && 0 \\
0 && 0 && 1
\end{bmatrix}
\begin{bmatrix}
x_r - x \\ y_r - y \\ \theta_r - \theta
\end{bmatrix}.
\end{equation}
The evolution of tracking error in robot frame is given by
\begin{equation}\label{eq:control_affine_form}
\dot{X}=F(X,t)+G(X)u,
\end{equation}
where $X = \begin{bmatrix}
x_e & y_e & \theta_e
\end{bmatrix}^{\top}$, $u = \begin{bmatrix}
v & \omega
\end{bmatrix}^{\top}$,
\begin{equation*}
F(X,t)=\begin{bmatrix}
v_r\cos \theta_e & v_r\sin\theta_e & \omega_r 
\end{bmatrix}^{\top},
\end{equation*}
\begin{equation*}
G(X) = \begin{bmatrix}
-1 & 0 & 0 \\ y_e & -x_e & -1
\end{bmatrix}^{\top}.
\end{equation*}

In this paper, we wish to design a controller for
trajectory tracking by unicycle robots, \new{where the controller communicates with the actuator over a communication network in an event-triggered manner}. We wish to
apply a time varying control input to the plant even between two
successive communication times \new{while}
transmitting limted information at each communication instance. So, we consider a polynomial control input whose coefficients are updated \new{at each event}. In particular, each control input to the 
plant is a polynomial of degree $p$ between any two 
consecutive events. Now, let $u_1(t)\ldef v(t)$ and $u_2(t) \ldef 
\omega(t)$. Then, for $i \in \{1,2\}$ $\forall \tau \in [0, t_{k+1} - t_k)$,
\begin{equation}\label{eq:control_law}
u_i(t_k + \tau) = f(\param_i(k),\tau) \ldef \sum_{j=0}^{p} 
a_{ji}(k)\tau^j.
\end{equation}
Here $(t_k)_{k \in \natz}$ is the sequence of communication time 
instants from the controller to the actuator. At $t_k$, the controller 
communicates the coefficients of the polynomial control input, 
$\param(k) \ldef [a_{ji}(k)]\in \real^{(p+1) \times 2}$, to the 
actuator. We also let $\param_i(k)$ denote the $i\tth$ column of 
$\param(k)$. \new{Note that, the results in this paper are easily extendable to the more generalized parameterized control law proposed in~\cite{AR-PT:2023}.}

Note also that, in contrast to the usual trend in 
ETC 
literature, here the control input to the robot is not held constant 
between two communication time instants. In this control method, at each communication time instant, the controller has to send only the coefficients of the polynomial control input and the actuator can easily generate the time varying control input to the plant once it receives the coefficients from the controller.

Figure~\ref{fig:ETPC_system} depicts the general configuration of the ETPC system considered in this paper.
\begin{figure}[h]
	\centering
	\includegraphics[width=7cm]{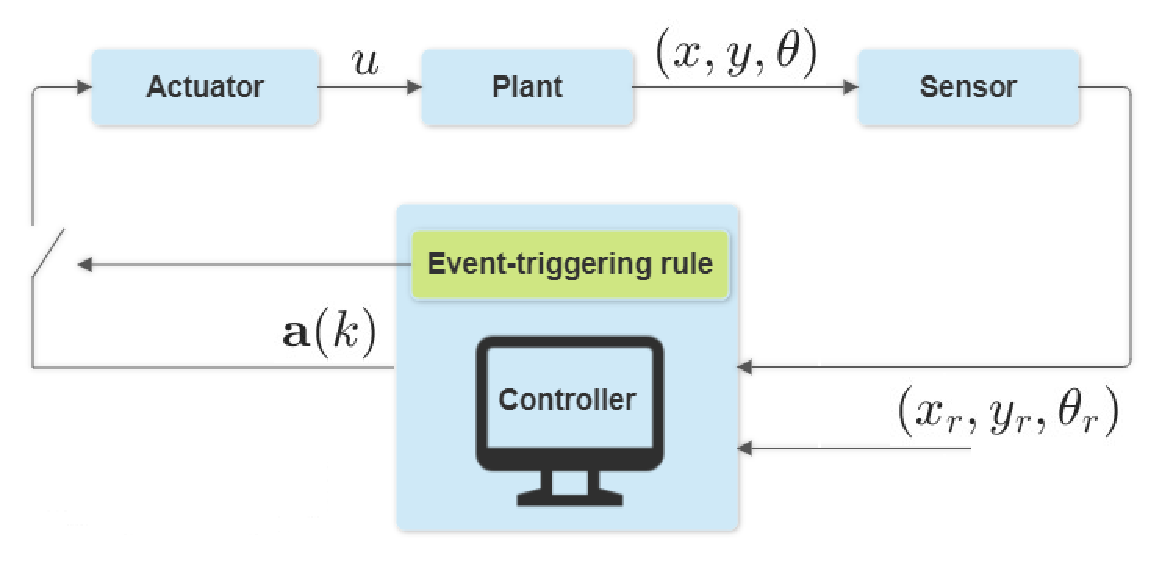}
	\caption{Event-triggered polynomial control configuration}
	\label{fig:ETPC_system}
\end{figure} 
Here, the controller has continuous access to the plant state $(x,y,\theta)$ and the reference trajectory $(x_r,y_r,\theta_r)$. We also assume that the controller has enough computational resources to evaluate the 
event-triggering condition and to update the coefficients of the 
polynomial control input at an event-triggering instant.

\subsection{Objective}

Our objective is to design a polynomial control law and an 
event-triggering rule to implicitly determine the communication time 
instants at which the coefficients of the 
polynomial control input are updated, so that the tracking error is 
uniformly ultimately bounded. We also wish to ensure the absence of Zeno behavior.

\section{DESIGN OF THE ETPC SYSTEM}\label{sec:design}
In this section, we design a polynomial control law as well as an event-triggering rule to achieve our objective.

\subsection{Control Law}
We first consider the continuous time feedback control signal, 
$\hat{u} \ldef \begin{bmatrix}
  \hat{v} & \hat{\omega}
\end{bmatrix}^T$, that was proposed in~\cite{ZJ-HN:1997}. In particular,
\begin{equation}\label{eq:jiang_controller}
\begin{aligned}
&\hat{v}(\hat{X},t) := v_1(\hat{X},t)+c_1(\hat{x}_e-c_3\hat{\omega}(\hat{X},t)\hat{y}_e), \\ &\hat{\omega}(\hat{X},t) := \omega_r + \gamma \hat{y}_e v_r \sinc \hat{\theta}_e + c_2 \gamma \hat{\theta}_e,
\end{aligned}
\end{equation}
where
\begin{equation}
\begin{aligned}\label{eq:x_1_hat}
& 
\begin{aligned}
v_1(\hat{X},t) := & v_r \cos \hat{\theta}_e + c_3 \hat{\omega}(\hat{X},t)(\hat{\omega}(\hat{X},t) \hat{x}_e - v_r \sin \hat{\theta}_e) \\ &- c_3 v_2(\hat{X},t) \hat{y}_e, 
\end{aligned}
\\
 & 
\begin{aligned}
v_2(\hat{X},t) := &\dot{\omega}_r+\gamma v_r \sinc \hat{\theta}_e(-\hat{\omega}(\hat{X},t)\hat{x}_e+v_r\sin\hat{\theta}_e) + \\ & \gamma \hat{y}_e \dot{v}_r\sinc\hat{\theta}_e
+(\gamma \hat{y}_e v_r \sinc'\hat{\theta}_e+c_2\gamma)(\omega_r-\hat{\omega}(\hat{X},t)).
\end{aligned}
\end{aligned} 
\end{equation}
Here $\hat{X} \ldef \begin{bmatrix}
\hat{x}_e & \hat{y}_e & \hat{\theta}_e
\end{bmatrix}^T$ evolves as,
\begin{equation}\label{eq:simulated_sys}
\dot{\hat{X}}=F(\hat{X},t)+G(\hat{X})\hat{u}(\hat{X},t), \quad \forall t \in [t_k,t_{k+1}),
\end{equation}
where $\hat{X}(t_k)=X(t_k)$ for all $k \in \natz$. Note also that 
$\sinc'\hat{\theta}_e$ denotes the derivative of $\sinc\hat{\theta}_e$ 
with respect 
to $\hat{\theta}_e$ and $c_1, c_2, c_3, \gamma > 0$ are design 
parameters. Reference~\cite{ZJ-HN:1997} considers the tracking error 
evolution~\eqref{eq:control_affine_form} with $u=\hat{u}(X,t)$ and 
shows global convergence of tracking error to zero under some 
conditions on the reference inputs.

Our idea is to find the best polynomial approximation of the control signal~\eqref{eq:jiang_controller}. At each communication time instant $t_k$, the coefficients of the polynomial control input~\eqref{eq:control_law} are updated by solving the following finite horizon optimization problems, for $i \in \{1,2\}$,

\begin{equation}\label{eq:a_k}
\begin{aligned}
\param_i(k) \in \argmin_{a \in \real^{p+1}} & 
\int_{0}^{T} \left| 
f(a,\tau)-\hat{u}_i(\hat{X},t_k+\tau)\right|^2+\delta_i\left|f(a,\tau)\right|^2
 \dd \tau,\\
\textrm{s.t.} \quad &  f(a,0)=\hat{u}_i(\hat{X}(t_k),t_k),
\end{aligned}
\end{equation} 
where $\hat{u}_1 \ldef \hat{v}$ and $\hat{u}_2 \ldef \hat{\omega}$. 
Here, $T>0$ is a finite time horizon which is to be designed and 
$\delta_{1}, \delta_{2} \ge 0$ are design parameters which are useful 
for penalizing large magnitudes of the control input signal.
Note that, in order to solve the optimization problem~\eqref{eq:a_k}, we require the values of $\hat{v}$ and $\hat{\omega}$ over the time horizon $(t_k,t_k+T]$. These values are estimated by numerically simulating the system~\eqref{eq:simulated_sys}.

Note that, the only constraint in optimization problem~\eqref{eq:a_k} 
fixes the value of $a_0$. Hence, letting
\begin{equation*}
  \bar{u}_i(\tau) \ldef \hat{u}_i(\hat{X},t_k+\tau), \quad \mu_i := 
  \bar{u}_i(0),
\end{equation*}
we can rewrite~\eqref{eq:a_k} as the following unconstrained 
optimization problem, for $i \in \{1,2\}$,
\begin{equation}\label{eq:a_ik}
\bar{\param}_i(k) \in \argmin_{a \in \real^p} J_i(a), 
\end{equation} 
where,
\begin{equation*}
\begin{aligned}
J_i(a)=	&\langle\bar{u}_i,\bar{u}_i\rangle +(1+\delta_{i}) \mu_i^2 T - 
2 \mu_i \langle\bar{u}_i,1\rangle - 2 
\sum_{j=1}^{p}a_j\langle\bar{u}_i,\tau^j\rangle\\ & 
+(1+\delta_{i})\sum_{j=1}^{p} \left[ 
\sum_{l=1}^{p}a_ja_l\frac{T^{j+l+1}}{j+l+1} + 2 \mu_i 
a_j\frac{T^{j+1}}{j+1} \right] .
\end{aligned}
\end{equation*}
Thus, we have
\begin{equation*}
\param_i(k)=\begin{bmatrix}
\mu_i & \bar{\param}^T_i(k)
\end{bmatrix}^T.
\end{equation*} 
Viewed this way, we see that Problem~\eqref{eq:a_k} is always 
feasible.
Note that, given the coefficients of the polynomial control input that are obtained by 
	solving~\eqref{eq:a_ik}, the control input that is applied by the 
	actuator is as given in~\eqref{eq:control_law}.
\begin{rem}\thmtitle{Control signal for $\tau > T$}
	\new{As discussed in Remark 1 of~\cite{AR-PT:2024}, control input $u(t_k + \tau)$ is well defined for all $\tau \in [t_k, t_{k+1})$ even if 
	$t_{k+1} - t_k > T$. }
	\remend
\end{rem}

\begin{prop}\label{prop:convexity}
	The optimization problem~\eqref{eq:a_ik} is a strictly convex optimization problem.
\end{prop}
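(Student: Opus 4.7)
The plan is to show that the objective $J_i$ is a convex quadratic function of $a\in\real^p$ over the (trivially convex) feasible set $\real^p$, by computing its Hessian and identifying it with the Gram matrix of monomials under the $L^2$ inner product on $[0,T]$.

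First I would inspect the expression for $J_i(a)$ and separate it into three pieces: (i) terms independent of $a$ (namely $\langle\bar{u}_i,\bar{u}_i\rangle+(1+\delta_i)\mu_i^2 T-2\mu_i\langle\bar{u}_i,1\rangle$), (ii) linear terms in $a$ (namely $-2\sum_j a_j\langle\bar{u}_i,\tau^j\rangle+2(1+\delta_i)\mu_i\sum_j a_j T^{j+1}/(j+1)$), and (iii) the purely quadratic term $(1+\delta_i)\sum_{j,l}a_ja_l T^{j+l+1}/(j+l+1)$. Constants and linear terms are irrelevant for convexity, so convexity of $J_i$ reduces to showing that the quadratic form in (iii) is non-negative.

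Next I would compute the Hessian $H\in\real^{p\times p}$ of $J_i$ and observe that its $(j,l)$-entry is
\begin{equation*}
H_{jl}=2(1+\delta_i)\frac{T^{j+l+1}}{j+l+1}=2(1+\delta_i)\int_{0}^{T}\tau^{j}\tau^{l}\,\dd\tau=2(1+\delta_i)\langle\tau^j,\tau^l\rangle.
\end{equation*}
Hence $H=2(1+\delta_i)G$, where $G$ is the Gram matrix of the family $\{\tau,\tau^2,\ldots,\tau^p\}$ with respect to the $L^2$ inner product on $[0,T]$. For any $a\in\real^p$,
\begin{equation*}
a^T H a=2(1+\delta_i)\left\langle \textstyle\sum_{j=1}^{p}a_j\tau^j,\,\sum_{j=1}^{p}a_j\tau^j\right\rangle\ge 0,
\end{equation*}
which is the main identity in the argument.

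Finally I would note that $1+\delta_i>0$ (since $\delta_i\ge 0$) and that the monomials $\tau,\tau^2,\ldots,\tau^p$ are linearly independent on $[0,T]$, so $G$ is in fact positive definite; consequently $H\succ 0$ and $J_i$ is strictly convex. Since the decision variable ranges over all of $\real^p$, the feasible set is convex as well, so~\eqref{eq:a_ik} is a convex (indeed strictly convex quadratic) optimization problem. The only mildly non-trivial step is the identification of the Hessian entries with the $L^2$ inner products $\langle\tau^j,\tau^l\rangle$; everything else is essentially bookkeeping.
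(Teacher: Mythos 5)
Your proof is correct and follows essentially the same route as the paper: compute the Hessian of $J_i$, recognize it as $2(1+\delta_i)$ times the Gram matrix of the monomials $\{\tau^j\}_{j=1}^p$ under the $L^2$ inner product on $[0,T]$, conclude positive semi-definiteness, and note the absence of constraints. Your additional observation that linear independence of the monomials gives strict convexity is a harmless strengthening (and is in fact the argument the paper invokes later, in the proof of Theorem~\ref{thm:non-zeno}, to establish invertibility of $\textbf{H}_i$).
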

\begin{proof}
	Proof of this result follows along similar lines as in the proof of Proposition 2 in~\cite{AR-PT:2024}.
	For all $i \in \{1,2\}$, the Hessian matrix of $J_i(.)$, denoted as 
	$\textbf{H}_i$, 
	is as follows,
	\begin{equation*}
	\textbf{H}_{i}=2(1+\delta_i)
	\begin{bmatrix}
	\frac{T^3}{3} & 	\frac{T^4}{4} & \ldots 
	& \frac{T^{p+2}}{p+2}\\
	\frac{T^{4}}{4} & 	\frac{T^{5}}{5} & \ldots 
	& \frac{T^{p+3}}{p+3}\\
	\ldots & 	\ldots & \ldots & \ldots\\
	\frac{T^{p+2}}{p+2} & 	\frac{T^{p+3}}{p+3} & \ldots 
	& \frac{T^{2p+1}}{2p+1}
	\end{bmatrix}, \quad \forall i \in \{1,2\}.
	\end{equation*}
	Observe that $\textbf{H}_{i}$ 
	is $2(1+\delta_i)$ times the Gram 
	matrix for the functions in $\{\tau^j: [0,T] \to \real\}_{j=1}^p$, 
	which is a set of linearly independent functions. Thus, 
	$\textbf{H}_i$ is a positive definite matrix $\forall i \in \{1,2\}$. Hence, the cost 
	function in~\eqref{eq:a_ik} is strictly convex. As there are no constraints 
	in~\eqref{eq:a_ik}, it is a strictly convex optimization problem.
\end{proof}

	Note also that the computational requirement of the proposed 
	controller is similar to that of the existing control methods like 
	the event-triggered MPC or the event-triggered 
	DBC.

\subsection{Event-Triggering Rule}

We consider the candidate Lyapunov function,
\begin{equation}\label{eq:Lyap_fn}
V(X,t)= \frac{1}{2}x_1^2+\frac{1}{2}y_e^2+\frac{1}{2\gamma}\theta_e^2,
\end{equation}
where $x_1 := x_e-c_3\hat{\omega}(X,t)y_e$, to design the event-triggering 
rule. Please note that in $x_1$, it is indeed $\hat{\omega}(X,t)$ and 
not $\hat{\omega}(\hat{X},t)$. Letting $ e(t) \ldef u(t)-\hat{u}(X,t) ,$
we see that the derivative of $V$ along the trajectories of the sampled 
data system~\eqref{eq:control_affine_form}-\eqref{eq:control_law} can 
be expressed as
	\begin{equation}\label{eq:V_dot}
\begin{aligned}
\dot{V}&=\frac{\partial V}{\partial t}+\frac{\partial V}{\partial X} \dot{X}=\frac{\partial V}{\partial t}+\frac{\partial V}{\partial X} (F(X,t)+G(X)u)
\\&= -\Sigma(X,t)+\Lambda(X,e,t)
\end{aligned}
\end{equation}
where
\begin{equation}\label{eq:Sigma}
\Sigma(X,t) \ldef c_1x_1^2+c_2\theta_e^2+c_3\hat{\omega}^2(X,t)y_e^2,
\end{equation}
\begin{equation}\label{eq:Lambda}
\Lambda(X,e,t) \ldef \frac{\partial V}{\partial X}G(X)e(t) .
\end{equation}
Equation~\eqref{eq:V_dot} follows directly from equation 
(41) in~\cite{ZJ-HN:1997}. 

Now, we define the event-triggering rule as follows,
\begin{equation}\label{eq:etr}
t_{k+1}\ldef \min\{t>t_k: \dot{V} \ge -\sigma \Sigma(X,t) \; \text{and} \; V(X,t) \ge \epsilon^2\},
\end{equation}
where $t_0 \ldef 0$ and $\sigma \in (0,1)$, $\epsilon^2>0$ are design 
parameters.

In summary, the complete system, $\sys$, is the combination 
of the reference system~\eqref{eq:reference_trajectory}, the unicycle 
robot model~\eqref{eq:control_affine_form}, the polynomial 
control law~\eqref{eq:control_law}, with coefficients chosen by 
solving~\eqref{eq:a_ik}, which are updated at the events determined by 
the event-triggering rule~\eqref{eq:etr}. That is,
\begin{equation}\label{eq:full-system}
  \sys : \ \eqref{eq:reference_trajectory}, 
  \eqref{eq:control_affine_form}, \eqref{eq:control_law}, 
  \eqref{eq:a_ik}, \eqref{eq:etr}.
\end{equation}

\section{ANALYSIS OF THE ETPC SYSTEM}\label{sec:analysis}
In this section, we analyze the performance of the designed 
ETPC system. We show that for system~\eqref{eq:full-system}, the tracking error is uniformly 
ultimately bounded and the IETs have a uniform positive 
lower bound. We first present a couple of lemmas that help to prove the main 
result of this paper.
\begin{lem}\label{lem:V}
  Consider system~\eqref{eq:full-system} and Lyapunov 
  function~\eqref{eq:Lyap_fn}. Let Assumption~\ref{A:A1} hold and 
  $\epsilon_k^2 \ldef V(X(t_k),t_k)$.
  Then, for any $c_1, c_2, 
  c_3, \epsilon^2>0$ and $\gamma > 0$ sufficiently large, $V(X(t),t)  \le \epsilon_k^2, \forall t \in [t_k,t_{k+1})$ and 
  $\forall k \in \nat$.
\end{lem}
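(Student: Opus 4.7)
The plan is a contradiction argument that exploits two ingredients of the construction: that the Lyapunov function is non-increasing immediately after each event, and that the event-triggering rule~\eqref{eq:etr} forbids $V$ from recrossing $\epsilon_k^2$ from below while $V \ge \epsilon^2$. Fix $k \in \nat$; since the event at $t_k$ was triggered by~\eqref{eq:etr}, $V(X(t_k),t_k) \ge \epsilon^2$, i.e.\ $\epsilon_k^2 \ge \epsilon^2$.

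The first key step is to verify $\dot V(t_k^+) \le 0$. At $t_k$ the polynomial coefficients are updated by~\eqref{eq:a_ik} so that $u(t_k) = \hat u(\hat X(t_k),t_k) = \hat u(X(t_k),t_k)$, using $\hat X(t_k) = X(t_k)$ from~\eqref{eq:simulated_sys}. Thus $e(t_k^+) = 0$, and $\Lambda(X(t_k),0,t_k) = 0$ in the decomposition~\eqref{eq:V_dot}; hence $\dot V(t_k^+) = -\Sigma(X(t_k),t_k) \le 0$, since $\Sigma \ge 0$ by~\eqref{eq:Sigma}.

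Next, I would suppose for contradiction that $V(X(s),s) > \epsilon_k^2$ at some $s \in (t_k,t_{k+1})$. Because $V$ is continuous along trajectories on $[t_k,t_{k+1})$, takes the value $\epsilon_k^2$ at $t_k$, and is non-increasing at $t_k^+$, an intermediate-value argument yields $s^* \in (t_k,s]$ at which $V(X(s^*),s^*) = \epsilon_k^2$ while $V$ begins a strict increase, so $\dot V(s^*) \ge 0$. Then $\dot V(s^*) \ge 0 \ge -\sigma \Sigma(X(s^*),s^*)$ and $V(X(s^*),s^*) = \epsilon_k^2 \ge \epsilon^2$; both conditions of~\eqref{eq:etr} therefore hold at $s^* \in (t_k,t_{k+1})$, contradicting the minimality of $t_{k+1}$. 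This forces $V(X(t),t) \le \epsilon_k^2$ for all $t \in [t_k,t_{k+1})$.

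The hypothesis ``$\gamma$ sufficiently large'' enters only through the validity of the decomposition~\eqref{eq:V_dot}, which the paper inherits from equation~(41) of~\cite{ZJ-HN:1997}; the contradiction itself is independent of $\gamma$. The main delicate point I anticipate is isolating $s^*$ rigorously and guaranteeing the strict inequality $s^* > t_k$: this is precisely where $\dot V(t_k^+) \le 0$ is essential, because without it the crossing point could coincide with $t_k$ and the event-triggering argument would not close.
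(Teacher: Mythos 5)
Your proof is correct and follows essentially the same route as the paper's: a contradiction argument that locates a crossing time in $(t_k,t_{k+1})$ where $V=\epsilon_k^2$ and $\dot V\ge 0$, and observes that the event-triggering rule~\eqref{eq:etr} would then already have fired there, since $\epsilon_k^2\ge\epsilon^2$ and $-\sigma\Sigma\le 0$. The only notable difference is bookkeeping: the paper closes the contradiction by invoking Lemma~4 of~\cite{RP-etal:2015} to get $\dot V<-\sigma\Sigma<0$ at the crossing point --- that lemma, not the decomposition~\eqref{eq:V_dot} (which holds for any $\gamma>0$), is where ``$\gamma$ sufficiently large'' actually enters --- whereas your version needs only $\Sigma\ge 0$, and that same lemma also gives $\Sigma(X(t_k),t_k)>0$, i.e.\ $\dot V(t_k^+)<0$ strictly, which is what cleanly guarantees your crossing point satisfies $s^*>t_k$.
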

\begin{proof}  
  Note that, as $e(t_k^+)=0$ for any $k \in \natz$, $\dot{V}(X(t_k^+),t_k^+)=-\Sigma(X (t_k^+),t_k^+)$. Further the event-triggering rule~\eqref{eq:etr} implies that $\epsilon_k^2 \ge \epsilon^2$ for all $k \in \nat$ and hence $\dot{V}(X(t_k^+),t_k^+) < 0$.  The last inequality follows from Lemma $4$ in~\cite{RP-etal:2015}, which states that there exists $\mathscr{V} >0$ for any $c_1, c_2, c_3, \epsilon^2>0$ and $\gamma > 0$ sufficiently large such that $V(X,t)\ge \epsilon^2$ implies $\Sigma(X,t) \ge \mathscr{V} >0$. 
  Now, let us prove the statement that $V(X(t),t) \le \epsilon_k^2, \forall t \in [t_k,t_{k+1})$ and $\forall k \in \nat$ by contradiction. Suppose that this statement is not true. Then, as $V(X,t)$ is a differentiable function of time, there must exist $\bar{t} \in (t_k,t_{k+1})$, for some $k \in \nat$, such that $V(X(\bar{t}),\bar{t})=\epsilon^2_k$ and $\dot{V}(X(\bar{t}),\bar{t})>0$. However, as $\bar{t}<t_{k+1}$, we can say that the triggering condition is not satisfied at $t=\bar{t}$ and hence $\dot{V}(X(\bar{t}),\bar{t}) < -\sigma \Sigma(X(\bar{t}),\bar{t}) < 0$. As there is a contradiction, we conclude that there does not exist such a $\bar{t}$ and hence the result is true.
\end{proof}
\begin{rem}\label{rem:V}
  Under Assumption~\ref{A:A1}, $V(X,t)$ is a continuously differentiable positive definite radially unbounded function of $X$. \remend
\end{rem}
Next, we show that the control $u$ and its time 
derivative are uniformly bounded between any two consecutive events. 
This result helps us to prove that the IETs generated by 
the proposed ETPC method do not exhibit Zeno behavior. 
\begin{lem}\label{lem:claim}
	Consider system~\eqref{eq:full-system} and Lyapunov 
	function~\eqref{eq:Lyap_fn}. Let Assumption~\ref{A:A1} hold and 
	$\epsilon_k^2 \ldef V(X(t_k),t_k)$.
	Then, there exist monotonically increasing functions $\beta_1 : \real_{> 0} \to \real_{> 
		0}$, $\beta_2: \real_{> 0} \to \real_{> 0}$ such that $\norm{u(t)} 
	\le \beta_1(\epsilon_k^2)$ and $\norm{\dot{u}(t)} \le 
	\beta_2(\epsilon_k^2)$, $\forall t \in [t_k, \min\{ t_{k+1}, t_k+T 
	\})$, $\forall k\in \nat$.
\end{lem}

\begin{proof}
	Note that, $\forall i \in 
	\{1,2\}$ and for any $k \in \nat$, $u_i(t)$ for $t \in 
	[t_k, t_{k+1})$ is chosen by solving the unconstrained strictly convex optimization 
	problem~\eqref{eq:a_ik}. 
	Thus, the stationarity condition $\frac{\partial}{\partial 
		a}J_i(a)=0$ is necessary and sufficient for $a$ to be the optimizer of 
	problem~\eqref{eq:a_ik} for $i \in \{1,2\}$. As a result, 
	the optimizers of problem~\eqref{eq:a_ik} are the solutions of the equation $\textbf{H}_{i} \bar{\param}_i(k) = D_i(k)$,
	where $\textbf{H}_{i}$ is the Hessian matrix \new{of $J_i(.)$} and
	\begin{align*}
		D_i(k)= & 2\begin{bmatrix}
	\langle\bar{u}_i,\tau^1\rangle &
	\langle\bar{u}_i,\tau^2\rangle &
	\ldots &
	\langle\bar{u}_i,\tau^p\rangle
	\end{bmatrix}^{\top} \\ & - 2(1+\delta_{i}) \mu_i \begin{bmatrix}
	\frac{T^2}{2}&\frac{T^3}{3} &\ldots&\frac{T^{p+1}}{p+1} 
	\end{bmatrix}^{\top}.
	\end{align*}
	\new{As $\textbf{H}_{i}$ 
	is invertible,} there is a unique optimal solution 
	$\bar{\param}_i(k)$ to the problem~\eqref{eq:a_ik} and is equal to
	$\bar{\param}_i(k) = \textbf{H}_{i}^{-1} D_i(k)$. 
	
	Now, note that, $V(\hat{X}(t),t)\le V(\hat{X}(t_k),t_k)$ for all $t 
	\in [t_k,\min\{t_{k+1},t_k+T\})$ and for any $k \in \nat$ as 
	$\dot{V}(\hat{X},t)=-\Sigma(\hat{X},t)\le 0$ where $\hat{X}$ evolves 
	as~\eqref{eq:simulated_sys}. As $\hat{X}(t_k)=X(t_k)$, $ 
	V(\hat{X}(t_k),t_k)=V(X(t_k),t_k)$ for each $k \in \nat$. According 
	to Remark~\ref{rem:V}, we can say 
	that there exists a class $\Mc{K}$ function $\alpha^{\prime}(.)>0$ 
	such that $\norm{\hat{X}(t)} \le \alpha^{\prime}(\epsilon_k^2)$ for 
	all $t \in [t_k,\min\{t_{k+1},t_k+T\}]$ for each $k \in \nat$. This 
	implies that, for each $i \in \{1,2\}$, for all $\tau \in 
	[0,\min\{t_{k+1}-t_k,T\})$ and $\forall k \in \nat$, 
	$|\bar{u}_i(\tau)|$ is upper bounded by a monotonically increasing 
	positive real valued function of $\epsilon_k^2$. By using this fact, we can say that there exists a monotonically increasing 
	function $\beta^{\prime}(.)$ such that $\norm{\param(k)}\le 
	\beta^{\prime}(\epsilon_k^2)$, $\forall k \in \nat$. Thus, we can say 
	that there exists monotonically increasing functions $\beta_1, \beta_2 : \real_{> 0} \to \real_{> 0}$ 
	such that $\forall t \in [t_k, \min\{ t_{k+1}, t_k+T \})$, $\forall 
	k\in \nat$, $\norm{u(t)} \le \norm{\param(k)} \norm{\begin{bmatrix}
		1 & t - t_k & \ldots & (t - t_k)^p
		\end{bmatrix}^{\top}} \le \beta_1(\epsilon_k^2)$ and $	\norm{\dot{u}(t)} \le \norm{\param(k)} \norm{ \begin{bmatrix}
			0 & 1 & \ldots & p(t - t_k)^{p-1} \end{bmatrix}^{\top}} \le \beta_2(\epsilon_k^2).$
	This proves Lemma~\ref{lem:claim}.
\end{proof}

Next, we present the main theorem of this paper that shows that the IETs do not exhibit Zeno behavior and the tracking error is uniformly ultimately bounded.

\begin{thm}\thmtitle{Absence of Zeno behavior and UU boundedness of tracking error}\label{thm:non-Zeno}
	Consider system~\eqref{eq:full-system}. Suppose Assumption~\ref{A:A1} 
	holds. Then, 
	\begin{itemize}
		\item the IETs, $t_{k+1} - t_k$ for $k \in \nat$, 
		are uniformly lower bounded by a positive real number that depends 
		on the bound of the initial tracking error.
		\item moreover, the lower bound on 
		the IETs converges to a positive real number, which is 
		independent of the initial tracking error, in finite time.
		\item the tracking error is uniformly ultimately bounded.
	\end{itemize}
	
\end{thm}

\begin{proof}

	We first prove the first statement of this theorem. Note that, 
	for each $k \in \nat$, $\dot{V}(X(t_k^+),t_k^+) = 
	-\Sigma(X(t_k),t_k)$ as $e(t_k^+)=0$. Hence, according to 
	the event-triggering rule~\eqref{eq:etr} for each $k \in \nat$, the 
	inter-event 
	time $t_{k+1}-t_k$ must be greater than the time it takes 
	$\Lambda(X,e,t)$ to grow from $0$ to $(1-\sigma)\Sigma(X,t)$. 

  Let $\epsilon_k^2 \ldef V(X(t_k),t_k)$. Given Lemma~\ref{lem:V} 
  and~\eqref{eq:Lambda}, we can find an upper bound on $\Lambda(X,e,t)$ 
  as follows,
	\begin{align*}
	\Lambda(X,e,t) & \le L(\epsilon_k^2)\norm{e}, \quad \forall t \in [t_k,t_{k+1}),
	\end{align*}
	where $L : \real_{> 0} \to \real_{> 0}$ is a continuous function defined as,
	\begin{equation*}
	L(R) \ge \max_{V \le R}\norm{\frac{\partial V}{\partial X}G(X)}.
	\end{equation*}
	According to Remark~\ref{rem:V}, the right hand side of the above 
	inequality exists as any sub-level set of $V$ is compact.
  Also note that the event-triggering rule~\eqref{eq:etr} implies that 
  for each $k \in \nat$, $V(X(t_k), t_k) \geq \epsilon^2$. Hence, we 
  can say that for any $k \in \nat$, the IET $t_{k+1}-t_k$ 
  is lower bounded by the time it takes $\norm{e}$ to grow from $0$ to 
  $(1-\sigma)\frac{\mathscr{V}}{L(\epsilon_k^2)}$, where  
  $\mathscr{V}>0$ is the same constant mentioned in the proof of 
  Lemma~\ref{lem:V}. 

	Next, note that, $\hat{u}(X,t)$ is a continuously differentiable 
	function of time. Thus, by using Lemma~\ref{lem:V} and similar 
	arguments as before, we can say that $\norm{\dot{\hat{u}}(X,t)}$ is 
	upper bounded by a monotonically increasing positive real valued function of $\epsilon_k^2$, $\forall t \in [t_k,t_{k+1})$, $\forall k \in \nat$.
 
  Then, $\forall t \in [t_k, \min \{t_k+T,t_{k+1}\})$, $\forall k \in 
  \nat$,
	\begin{equation*}
	\frac{d}{dt}\norm{e(t)} \le \norm{\dot{e}(t)} \le \norm{\dot{u}(t)}+\norm{\dot{\hat{u}}(X,t)}\le \alpha_e(\epsilon_k^2),
	\end{equation*}
	for some monotonically increasing function $\alpha_e:
        \real_{> 0} \to \real_{> 0}$. The last inequality follows from
        Lemma~\ref{lem:claim}. Thus, for each $k \in \nat$,
        $t_{k+1}-t_k$ is lower bounded by
        $(1-\sigma)\frac{\mathscr{V}}{\alpha_e(\epsilon_k^2)L(\epsilon_k^2)}>0$. Now, from Lemma~\ref{lem:V}, we know that $\epsilon^2_1 \geq 
\epsilon^2_k$, for $k \in \nat$, where $\epsilon_1^2 := V(X(t_1),t_1) \ge \epsilon^2$. Hence, we can say that the inter-event 
times, for $k \in \nat$, are uniformly lower bounded by 
$(1-\sigma)\frac{\mathscr{V}}{\alpha_e(\epsilon_1^2)L(\epsilon_1^2)}>0$. This completes 
the proof of the first statement of this theorem.

Next note that, according to Lemma~\ref{lem:V} and the fact that the 
sequence of IETs does not exhibit Zeno behavior, the 
event-triggering rule~\eqref{eq:etr} implies that $\dot{V}(X(t),t) < 
-\sigma \Sigma(X(t),t) < 0$ for all $t\ge t_0$ such that $V(X(t),t) \ge 
\epsilon^2$. Thus, there exists $\bar{t} \in [t_0, \infty)$ such that 
$V(X(t),t) \le \epsilon^2$ for all $t \ge \bar{t}$. Under Assumption~\ref{A:A1}, this implies that the tracking error $X$ is uniformly ultimately bounded. Moreover, we can also 
say that there exists a finite $\bar{k}\in \nat$ such that 
$\epsilon_k^2=\epsilon^2$ for all $k \ge \bar{k}$. Thus, the lower 
bound on the IETs converges to 
$(1-\sigma)\frac{\mathscr{V}}{\alpha_e(\epsilon^2)L(\epsilon^2)}>0$ in 
finite time and events. This completes the proof of the last two statements of this theorem.
\end{proof}

\section{Simulation and Experimental Results}\label{sec:results}

In this section, we present results of our simulations and
experiments of the proposed ETPC for trajectory tracking. We compare the proposed method with 
time-triggered control (TTC) and the zero-order hold based 
ETC algorithm described in~\cite{RP-etal:2015}. 
We present the 
simulation and experimental results for four reference trajectories 
that were generated using the unicycle 
model~\eqref{eq:reference_trajectory}. The resulting paths in these 
four cases are shown in Figure~\ref{fig:ref-traj}.
\begin{figure}[]
  \begin{subfigure}[b]{.49\columnwidth}
    \includegraphics[width=.8\linewidth]{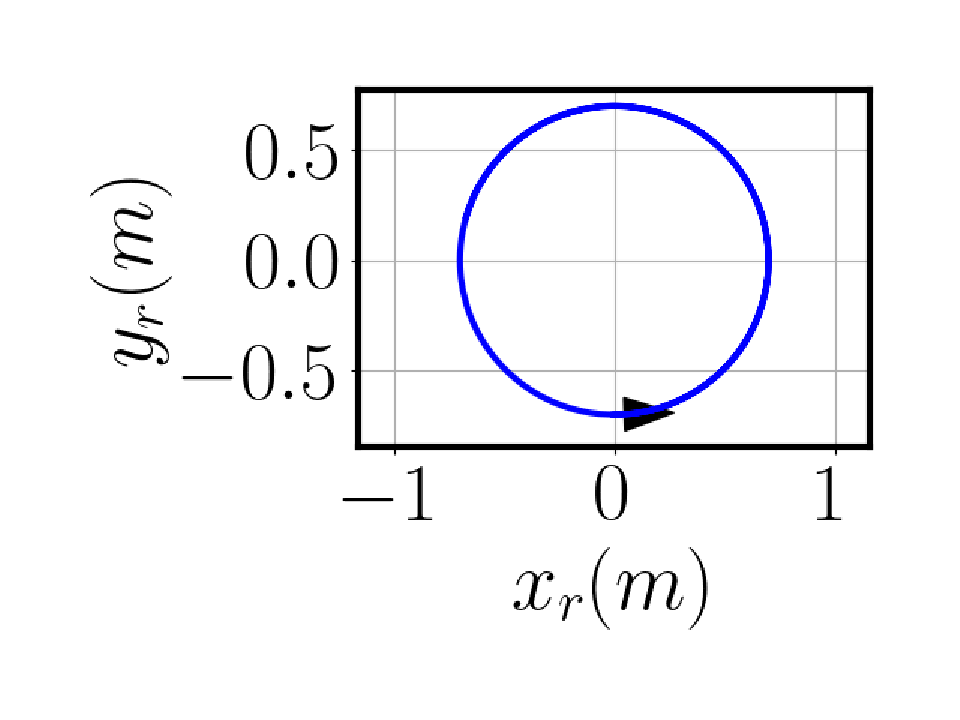}
    \label{fig:p1-ref}
  \end{subfigure}
  \hfill
  \begin{subfigure}[b]{.49\columnwidth}
    \includegraphics[width=.8\linewidth]{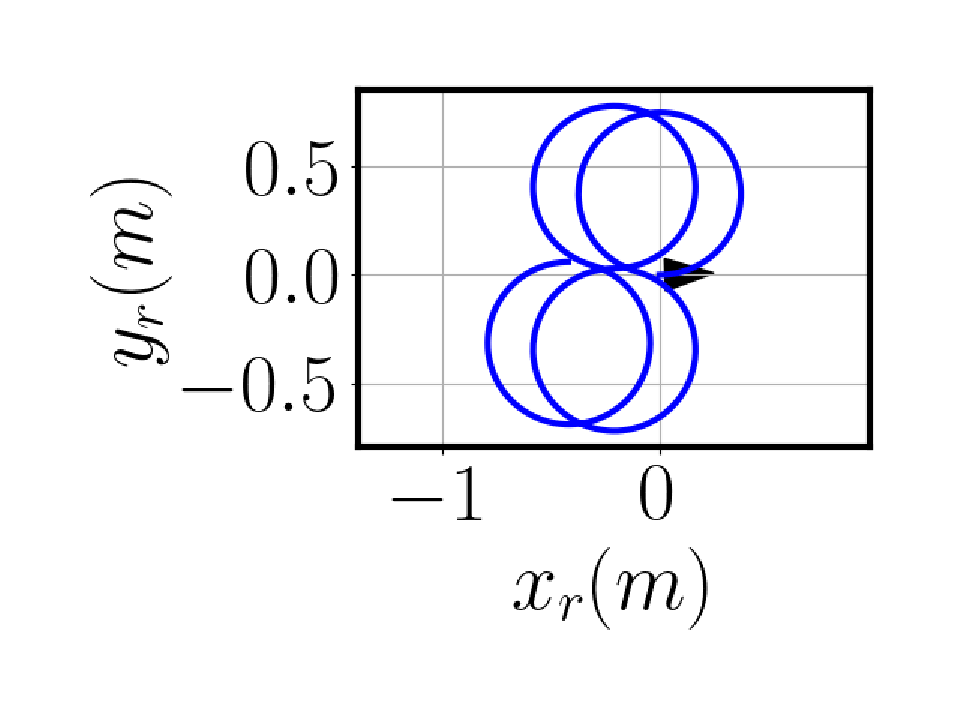}
    \label{fig:p2-ref}
  \end{subfigure}
  \begin{subfigure}[b]{.49\columnwidth}
    \includegraphics[width=.8\linewidth]{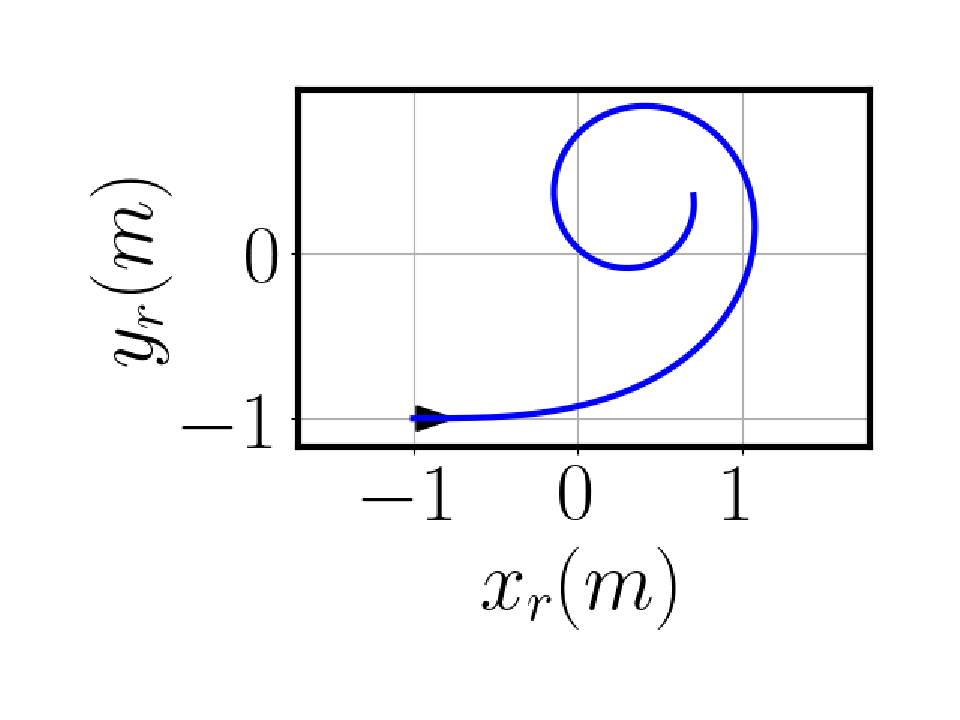}
    \label{fig:p3-ref}
  \end{subfigure}
  \hfill
  \begin{subfigure}[b]{.49\columnwidth}
    \includegraphics[width=.8\linewidth]{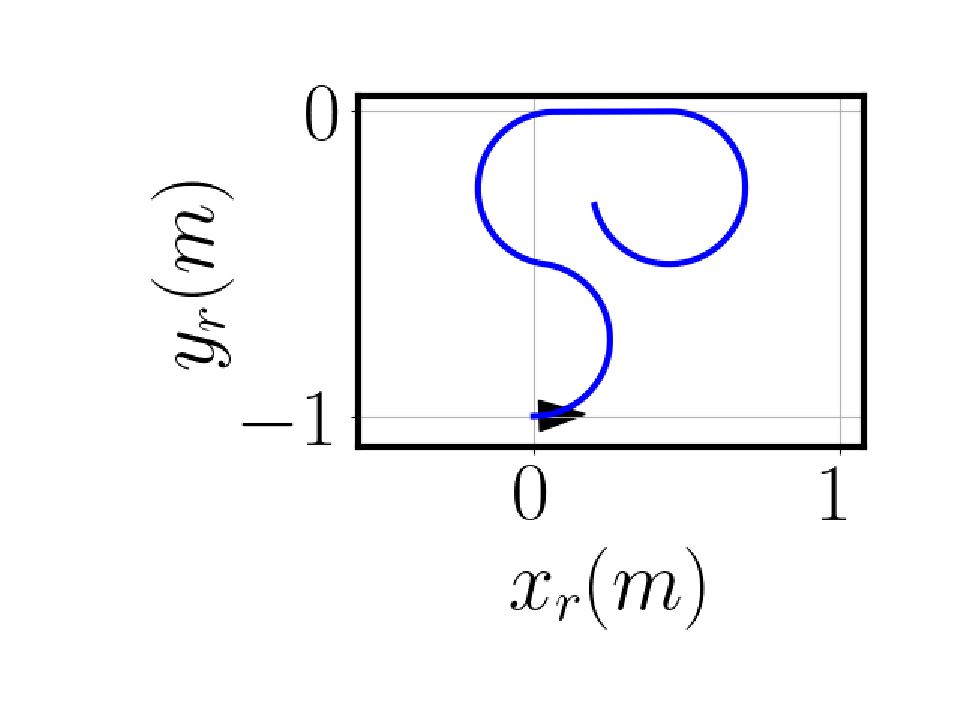}
    \label{fig:p4-ref}
  \end{subfigure}
  
  \caption{Reference trajectories under consideration a) Path $1$ , b) Path $2$,
c) Path $3$, d) Path $4$}
  \label{fig:ref-traj}
\end{figure}
The reference velocity for these trajectories is $15$cm/s. In paths 1
and 3, the reference angular velocity, $\omega_r$ is constant and
smoothly changing, respectively, while in paths 2 and 4, $\omega_r$ is
piecewise constant.

\subsubsection*{Evaluation metrics}

Let $T_e$ be the total simulation/experiment time duration. We define 
the transient period, (TP) as the time interval $[0,T_{c}]$, where
\begin{equation*}
  T_{c} \coloneqq \min \{t \geq 0 \ \colon V(t) \leq \epsilon^2\},
\end{equation*}
that is $T_c$ is the time the Lyapnuov  function $V$ takes to first
hit a value below $\epsilon^2$. We define the steady state period, (SS) as the 
time interval $(T_{c},T_e]$. Let $N_t$ and $N_s$ be the total number of 
events in TP and SS, respectively.

We compare the proposed ETPC with ETC and TTC in terms of the number of 
events, $N_t$ and $N_s$, and the convergence time $T_c$. In order to do 
a fair comparison with TTC, we 
compare ETC and ETPC against TTC with two different transmission 
frequencies/periods. In particular, henceforth, TTC$1$ and TTC$2$ refer 
to TTC with average transmission frequency over $T_e$ for ETC and ETPC 
respectively, in the corresponding simulation or experiment.

In general, the UU bound of $V(t)$ for TTC1 and TTC2 is higher 
than $\epsilon^2$. In practical experiments with ETC and ETPC also the 
UU bound is often higher than $\epsilon^2$. This is 
due to several 
unmodeled features, including sampling rate for the motion capture 
system (restricted to 240 frames per second here), measurement latency, 
error in obtained pose information, computation times for solving the 
optimization problem~\eqref{eq:a_k}, communication delays and latency, 
delay introduced by onboard serial communication on robot, 
environmental conditions such as slip and non-uniform surface friction. 
Even the kinematics of the robot may not exactly be unicycle model. 
Given all this, another evaluation metric we employ is $\bar{\epsilon}^2 
\geq \epsilon^2$, the UU bound for $V(t)$. Specifically, we 
define $\bar{\epsilon}^2$ as
\begin{equation*}
    \bar{\epsilon}^2 \coloneqq \max_{t \geq T_{c}} \{ V(t) \}.
\end{equation*}

\subsection{Simulations}
\label{sec:simulation}

Simulations were done for the system~\eqref{eq:full-system} with four 
reference trajectories that generate the paths shown in 
Figure~\ref{fig:ref-traj}. In the simulations, the integration time 
step was a fixed value of $5$ms for all the algorithms. The design parameters were chosen as $\gamma = 100$, $c_1 = 0.02$, 
$c_2 = 0.05$, $c_3 = 0.01$, $\sigma = 0.5$ and $\epsilon = 0.1$. The prediction time horizon for ETPC, $T$ was chosen as $1$ second and polynomial degree was chosen as $3$. The 
initial pose error was sampled uniformly from the set $[(-2$m, $2$m$)$, 
$(-2$m, $2$m$)$, $(-0.2$radians, $0.2$radians$)]$ to get a set of 
$1000$ initial conditions. Simulations were conducted for each 
algorithm, for each of the four paths in Figure~\ref{fig:ref-traj}, for 
each of these $1000$ initial conditions.

Figure~\ref{fig:N-sim} depicts the number of events for paths $1$ to 
$4$ in the transient and steady state period for ETC and ETPC.
\begin{figure}[!htb]

\begin{subfigure}[b]{0.47\columnwidth}
    \includegraphics[scale = 0.26, center]{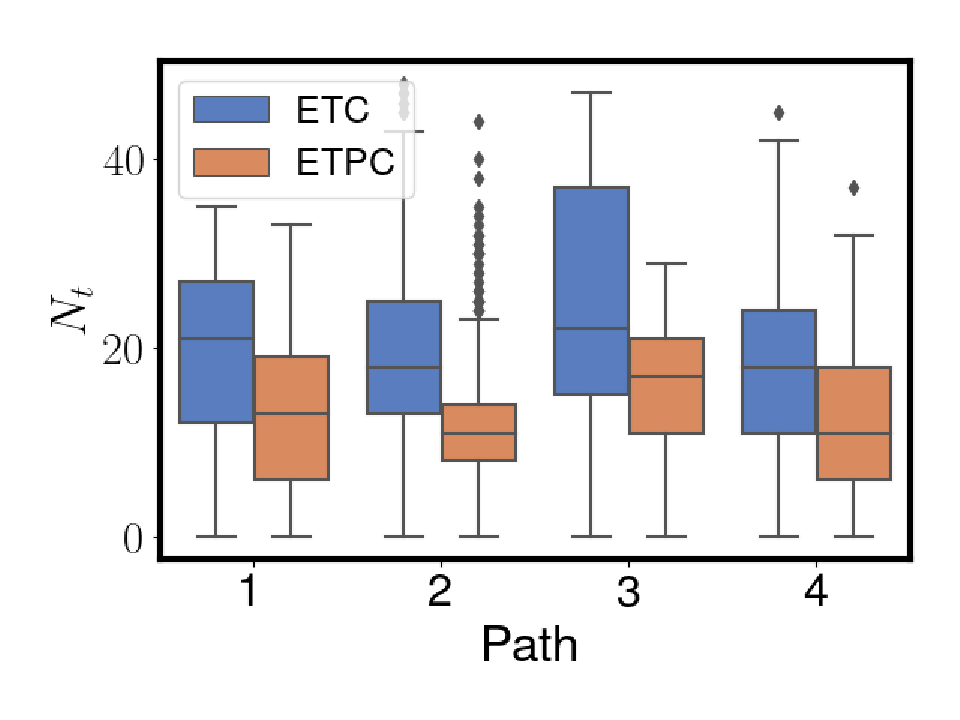}
    \caption{Number of events in TP}
  \end{subfigure}
  \hfill  
  \begin{subfigure}[b]{0.47\columnwidth}
    \includegraphics[scale = 0.26, center]{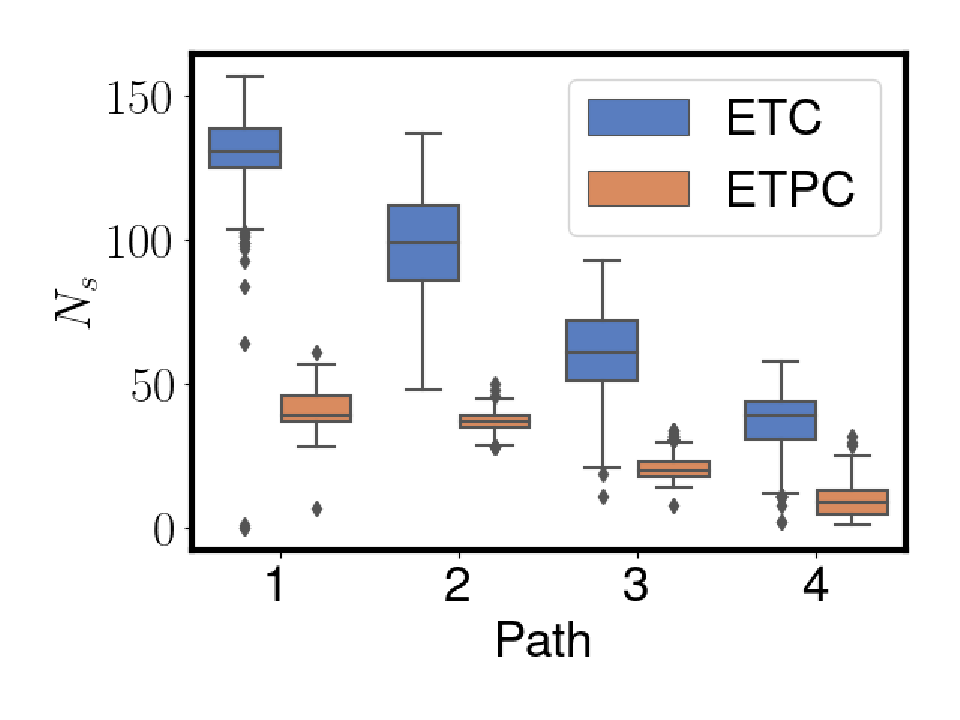}
    \caption{Number of events in SS}
  \end{subfigure}  
  \caption{Comparison of number of events for simulated paths for 
    algorithms under consideration.}
  \label{fig:N-sim}
\end{figure}
It is observed that the median of $N_t$ for ETPC is reduced by $38.1$\%, $38.9$\%,
$22.8$\% and $38.9$\% in comparison to median of $N_t$ for ETC for
paths $1$ to $4$, respectively. Similarly, the median of $N_s$ for
ETPC shown in Figure~\ref{fig:N-sim} is reduced by $70.3$\%, $62.7$\%,
$67.3$\% and $77$\% in comparison to median of $N_s$ for ETC for paths
$1$ to $4$, respectively. 
The third quartile of $N_s$ for all paths is much lower with ETPC than
even the first quartile with ETC. Thus, we can say that our algorithm
requires far fewer number of events than ETC in steady state and also,
to a lesser extent, during the transient period.

Figure~\ref{fig:bounds-sim} shows the UU bound of $V$ for TTC$1$ and
TTC$2$ algorithms for all the paths. The UU bound of $V$ for ETC and
ETPC is within numerical tolerance of $\epsilon^2$ bound. TTC1 and
TTC2 algorithms often give a UU bound in the order of hundreds. Thus,
we can conclude that the tracking performance with ETC or ETPC is
significantly better than that of TTC with comparable frequency of
transmissions.
\begin{figure}[h]
  \centering
  \includegraphics[width=0.2\textwidth]{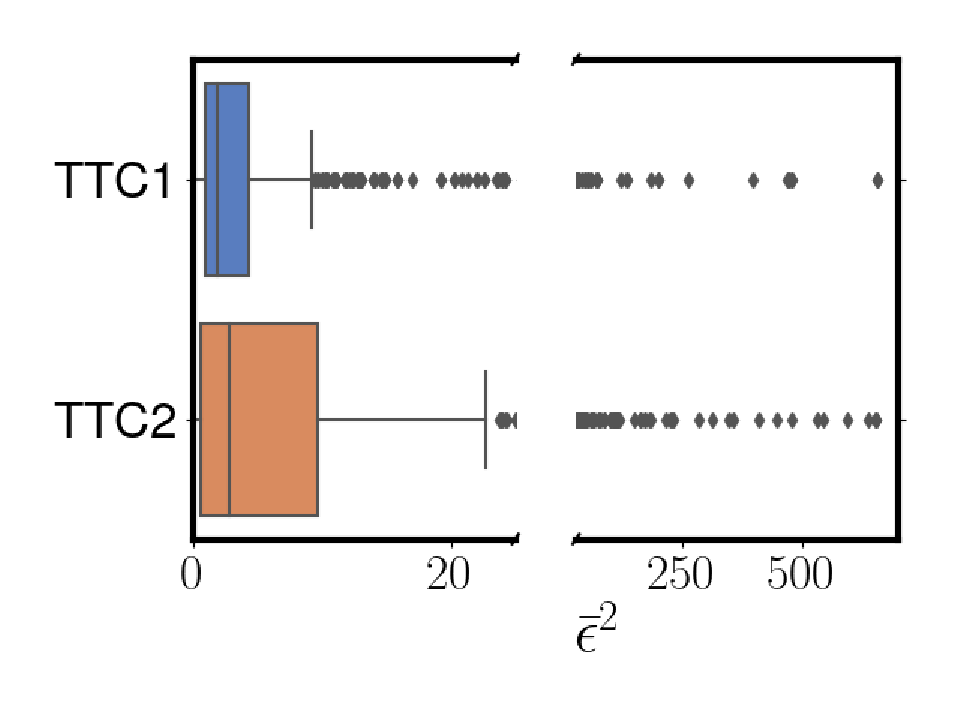}
  \caption{UU bound for all simulated paths.}
  \label{fig:bounds-sim}
\end{figure}

Figures~\ref{fig:traced-path-and-lyapunov} illustrates the result of one simulation for Path~3 with the initial 
pose error $(-1.02$m $,1.08$m $,0.142$rad$)$. Figure~\ref{fig:p3-sim} 
shows the reference path and the paths traced by ETC and ETPC while 
Figure~\ref{fig:lyp-sim} shows the evolution of Lyapunov function. 
It is observed that the error reduces over time and then oscillates to
ensure that $V$ stays within $\epsilon^2$ bound. It is observed that
the Lyapunov function under both ETC and ETPC stays within the
$\epsilon^2$ bound once it enters it.
 Notice that in Figure~\ref{fig:etpc-sim-error},
$\theta_{e}$ oscillates significantly even at the end of the 
simulation. However the corresponding $V$ stays within $\epsilon^2$ 
bound once it enters it as seen in Figure ~\ref{fig:lyp-sim}. This 
happens because the contribution of $\theta_{e}$ to the Lyapunov 
function $V$ in~\eqref{eq:Lyap_fn} is significantly low with $1/(2 
\gamma) = 1/200$.
\begin{figure}[!htb]
\begin{subfigure}[t]{0.45\columnwidth}
\centering
    \includegraphics[scale = 0.26]{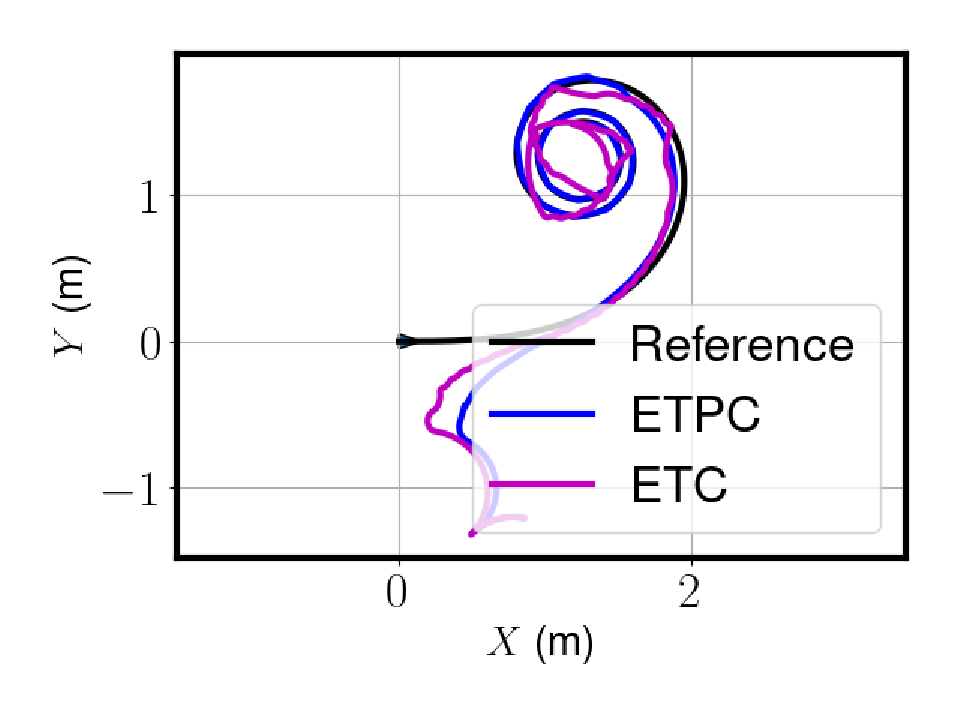}
    \caption{Path traced by robot}
    \label{fig:p3-sim}
  \end{subfigure}
\hfill
\begin{subfigure}[t]{0.45\columnwidth}
    \centering
    \includegraphics[scale = 0.25]{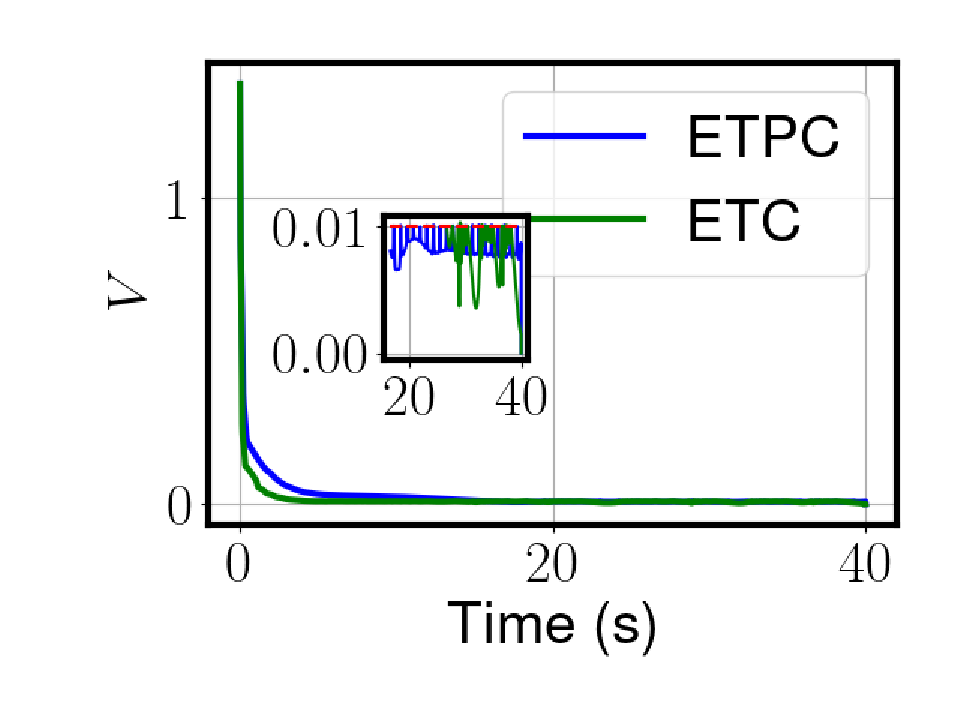}
    \caption{Evolution of $V$}
     \label{fig:lyp-sim}
  \end{subfigure}
\caption{Results of a simulation of robot tracking the reference 
trajectory that generates Path 3.}
\label{fig:traced-path-and-lyapunov}
\end{figure}
\begin{figure}[!htb]
  \centering
  \includegraphics[width=0.4\textwidth]{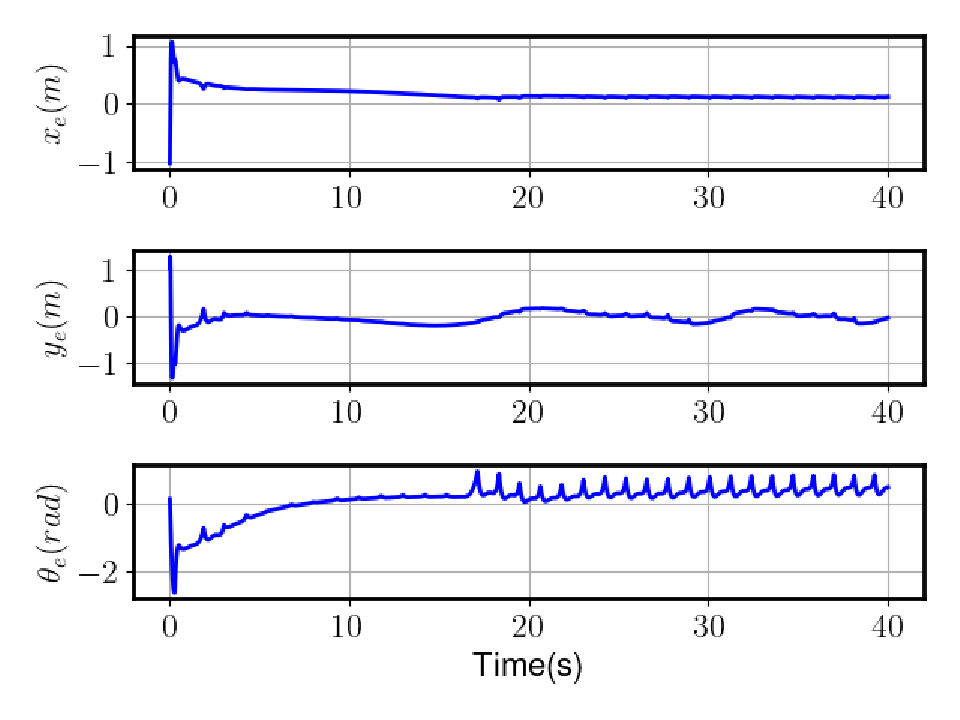}
  \caption{Evolution of pose error for simulated Path $3$ with initial 
  pose error of $(-1.02$ m$,1.08$ m$,0.142$rad$)$ with ETPC.}
  \label{fig:etpc-sim-error}
\end{figure}

\subsection{Practical Experiments}

Experiments have been conducted on a 3pi+ 32U4 robot 
manufactured by Pololu Robotics and Electronics
The robot is equipped with two quadrature encoders, which are utilized 
for robot level wheel velocity control done using a conventional PID 
controller. The microcontroller on the robot is an AtMega32U4 with 
$28$kB of memory available to the user. An RNXV WiFly module is 
interfaced with the robot using appropriate electronics  for wireless 
communication with a desktop computer. The 
desktop computer has a 64-bit Windows $11$ operating system with 
installed RAM of 40 GB and an Intel i7-8700 CPU with clock speed of 
3.20 GHz. The desktop computer is also interfaced with OptiTrack motion 
capture system \cite{optitrack}, which provides the pose measurements 
of the robot, with mean error in position of $5.4$ mm. The sampling 
rate of the motion capture system is 240 frames per second, which 
translates to a sampling period of about $4.1$ms. The motion capture 
system communicates the measurements over a wired single hop network 
connection and the latency in this communication could be up to $1$ms. 
The desktop computer monitors the event-triggering rule and also 
computes the coefficients of ETPC whenever required. This computational 
latency on the computer depends on the prediction horizon $T$ and 
integration time step and could go up to $25$ms. After an event occurs 
and the desktop computer computes the ETPC coefficients, it 
communicates them to the robot wirelessly. The sum of latency in the 
wireless network, control loops on the robot and the latency caused by 
onboard serial communication on the robot is on average between $15$ms 
to $20$ms but sometimes could go up to $100$ ms. 

The design parameters for experiments were chosen so that ETC and ETPC 
have a similar $\bar{\epsilon}^2$ UU bound on $V$. 
In particular, the 
chosen parameters are $\gamma = 1$, $c_1 = 0.5$, $c_2 = 0.8$, $c_3 = 
0.7$, and $\sigma = 0.9$. All quantities to be communicated are restricted to a precision of two decimal places. In our experiments, we use Transmission 
Control Protocol (TCP) which is an upper layer protocol to the Internet 
Protocol(IP). The size of each IP packet is fixed at $64$ Bytes, in 
which $46$ Bytes are reserved for data~\cite{rfc894}. For ETC, TTC$1$ 
and TTC$2$, the size of payload (actual intended message) is $4$ Bytes 
while it is $16$ Bytes for ETPC. A packet for tranmission is 
constructed by padding the actual payload to acheive the packet sizes 
required by the protocol. Thus, in comparison to ETC, TTC$1$, and 
TTC$2$, ETPC communicates a lot more information in a single packet 
less often without affecting communication overheads in each packet.

For each path shown in Figure~\ref{fig:ref-traj}, two initial 
pose errors were considered and for each initial pose 
error and each algorithm, $10$ experiments were conducted. The specific 
initial pose errors, in (m,m,radians), were chosen as $(0,0,0)$, 
$(1,0.2,1.57)$ for Path $1$;  $(0,0,0)$, $(0.2,0.2,1.57)$ for Path 
$2$; $(0,0,0)$, $(0,-0.8,0)$ for Path $3$; and $(0,0,0)$, 
$(-0.05,-0.5,-0.02)$ for Path $4$.

Fig. ~\ref{fig:p3-exp} shows the path 
traced by ETC, ETPC and 
TTC2 for Path $3$. It is seen that the robot eventually starts tracking 
the path with ETC and ETPC algorithms as seen by the corresponding 
decrease in Lyapunov function as shown in Fig.
~\ref{fig:P3-V} with $V$ eventually staying within a small uniform 
ultimate bound. For TTC$2$, we see that the UU 
bound on $V$ is much higher than for ETC and ETPC and the tracking 
behaviour is not good. 
Note that due to the unmodeled 
high computational and communication latency as well as other 
disturbances and modeling errors, the UU bound of 
$V$ for ETC and ETPC are also higher than the designed $\epsilon^2$.
\begin{figure}[!htb]
  \begin{subfigure}[t]{0.45\columnwidth}
    \centering
    \includegraphics[scale = 0.26]{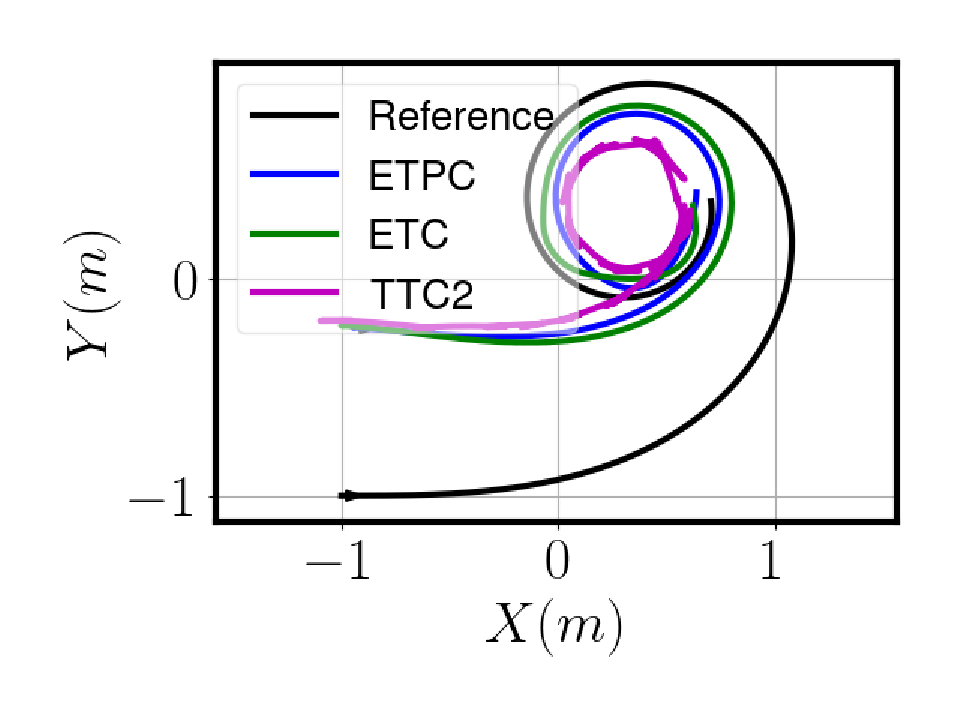}
    \caption{Path traced by robot}
    \label{fig:p3-exp}
  \end{subfigure}
  \hfill
  \begin{subfigure}[t]{0.45\columnwidth}
    \centering
    \includegraphics[scale = 0.25]{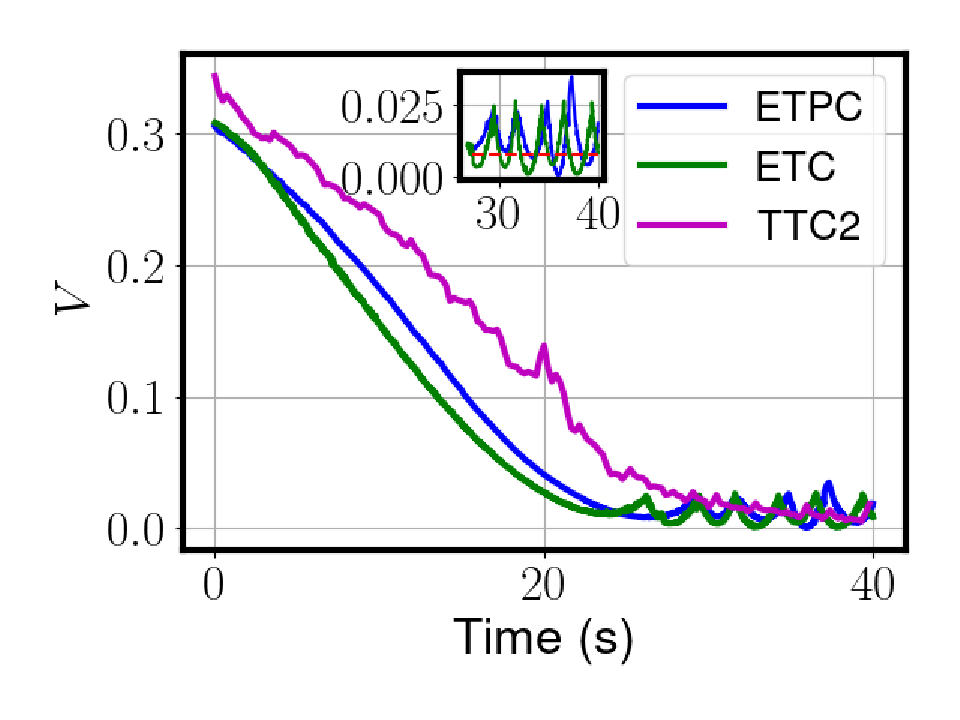}
    \caption{Evolution of $V$}
    \label{fig:P3-V}
  \end{subfigure}
  \caption{Results of an experiment of robot tracking the reference 
    trajectory that generates Path 3.}
\end{figure}

In transient period, the median of $N_t$ for ETPC is reduced by
$80.4$\% in comparison to the median of $N_t$ for ETC as seen in
Figure~\ref{fig:Nt-exp}. Similarly, Figure~\ref{fig:Ns-exp} shows the
number of steady state events for all paths. It is observed that the
median of $N_s$ for ETPC is reduced by $45$\% in comparison to median
of $N_s$ for ETC. Thus, we conclude that in both transient and steady
state, our algorithm has fewer number of
events. Figure~\ref{fig:tc-exp} shows the convergence time to
$\epsilon^2$ bound for all algorithms. We see that, in all cases, ETC
and ETPC have lower convergence times in comparison to TTC even when
outliers are considered. The convergence times for ETC are slightly
better than those of ETPC. We also observe that the UU bound as seen
in Figure~\ref{fig:eps1-exp} is much lower for ETC and ETPC than for
TTC and somewhat similar for ETC and ETPC.
From the suite of simulations and experiments, we can
conclude that our algorithm has fewer number of events than ETC while
ensuring similar tracking behaviour. Compared to TTC with a similar
average transmission frequency, the performance of ETC and ETPC are
far superior.

\begin{figure}[h]
\begin{subfigure}[b]{.49\columnwidth}
    \includegraphics[width=0.8\linewidth]{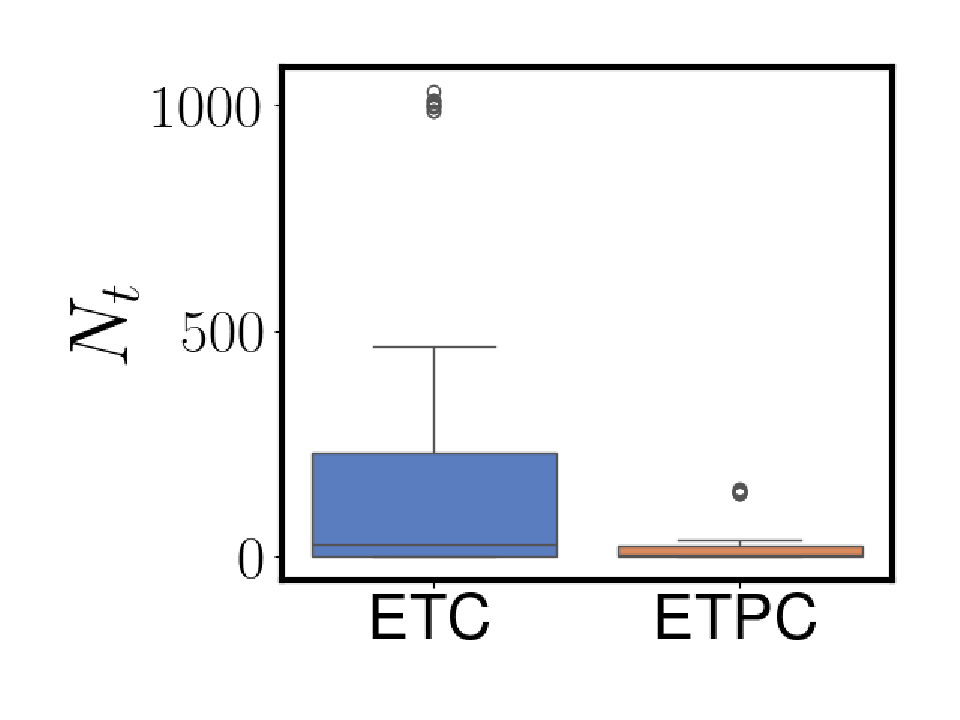}
    \caption{Number of events in TP}
     \label{fig:Nt-exp}
  \end{subfigure}
\begin{subfigure}[b]{.49\columnwidth}
    \includegraphics[width=0.8\linewidth]{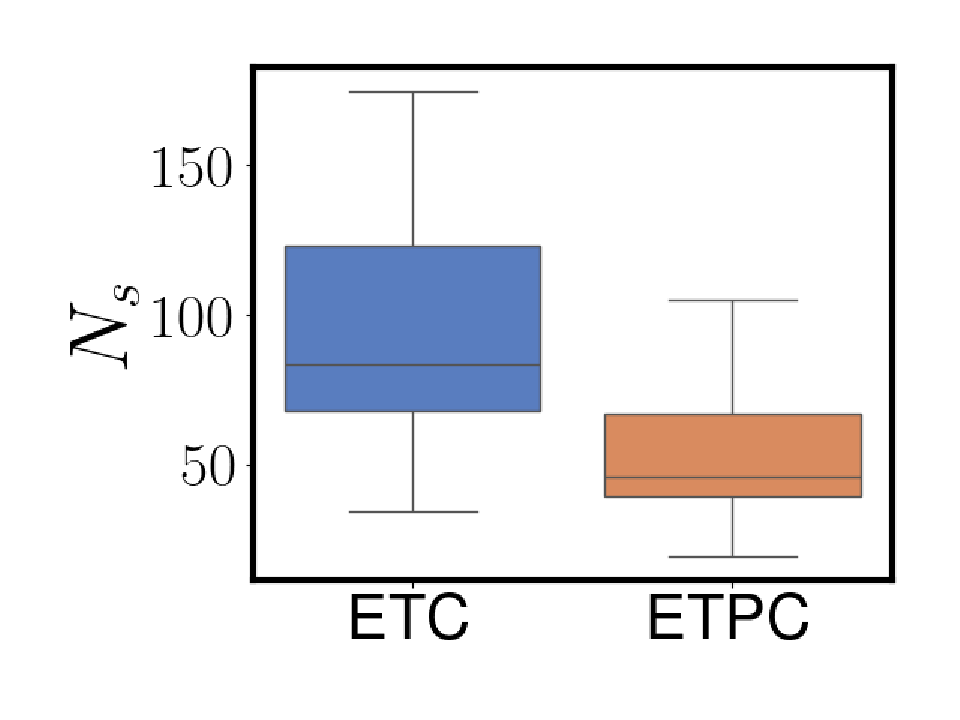}
    \caption{Number of events in SS}
     \label{fig:Ns-exp}
  \end{subfigure}
\begin{subfigure}[b]{.49\columnwidth}
    \includegraphics[width=0.8\linewidth]{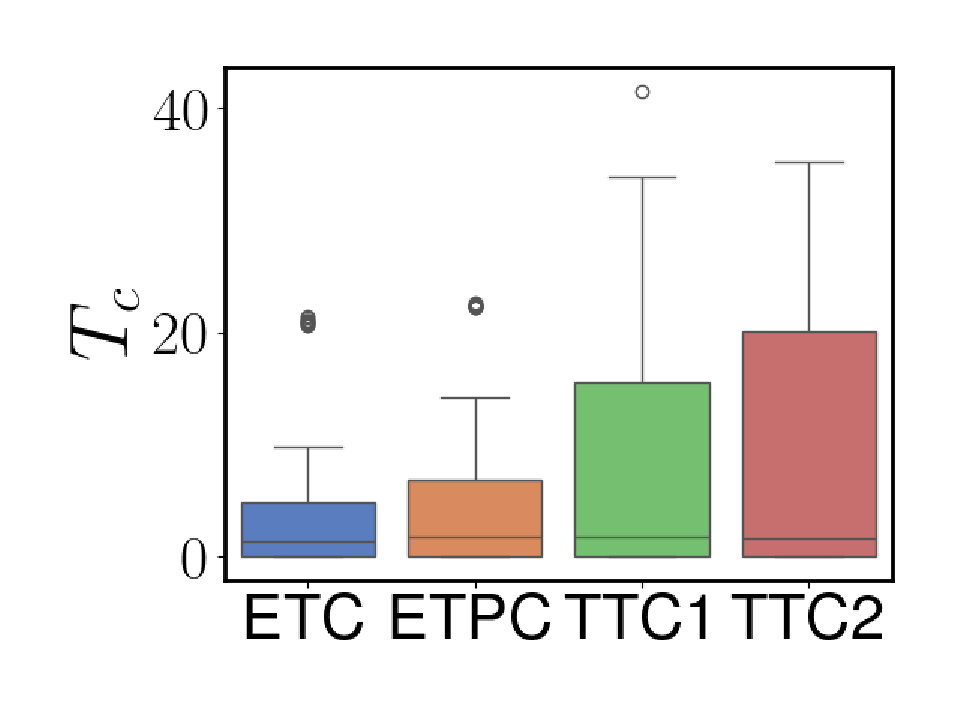}
    \caption{Convergence to $\epsilon^2$ bound}
     \label{fig:tc-exp}
  \end{subfigure}
\begin{subfigure}[b]{.49\columnwidth}
    \includegraphics[width=0.8\linewidth]{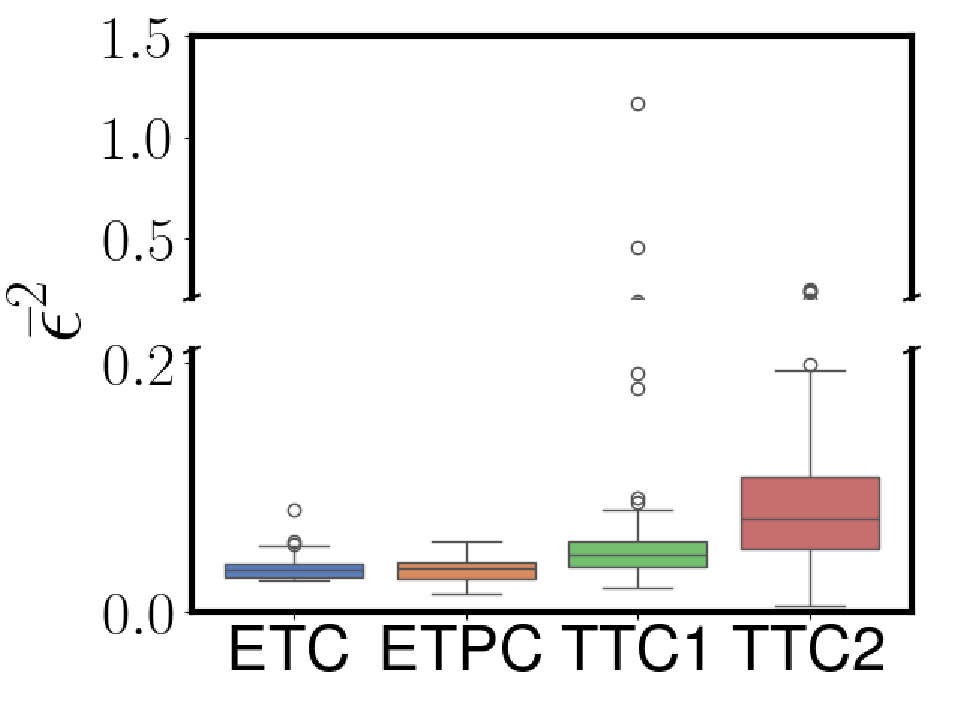}
    \caption{UU bound of V}
     \label{fig:eps1-exp}
  \end{subfigure}

\caption{Results of practical experiments for all the 
paths and all algorithms.}
\label{fig:exp-metrics}
\end{figure}

\section{CONCLUSION}\label{sec:conclusion}
In this paper, we proposed an ETPC method 
for trajectory tracking for unicyle model of robots where the reference 
trajectory is modeled as the solution of a reference unicycle model. 
We designed an 
event-triggering rule that guarantees UU boundedness of 
tracking error and non-Zeno behavior of IETs. The proposed 
method works best in cases where communication is significantly more 
costly than computation. We illustrated the results through numerical 
simulations and experiments. We also showed that the number of events 
generated by the proposed controller is significantly less compared to 
a time-triggered controller and an event-triggered controller based on 
zero-order hold, while guaranteeing similar tracking performance. 
Future work includes control under input disturbances, time delays, 
quantization of the parameters and multi-robot control.

\bibliographystyle{IEEEtran}
\bibliography{references}

\end{document}